\pgfplotsset{compat=newest}
\newcommand\footnoteref[1]{\protected@xdef\@thefnmark{\ref{#1}}\@footnotemark}
\newtheorem{theorem}{Theorem}
\newtheorem{lemma}[theorem]{Lemma}
\newtheorem{corollary}[theorem]{Corollary}
\newtheorem{example}[theorem]{Example}
\newtheorem{remark}[theorem]{Remark}
\newtheorem{definition}{Definition}
\newenvironment{mymatrix}{\begin{bmatrix}} {\end{bmatrix} }
\def\ve#1{{\mathchoice{\mbox{\boldmath$\displaystyle #1$}}%
              {\mbox{\boldmath$\textstyle #1$}}%
              {\mbox{\boldmath$\scriptstyle #1$}}%
              {\mbox{\boldmath$\scriptscriptstyle #1$}}}}
\newif\ifcomment
\newcommand{\Fq}{\ensuremath{\mathbb{F}_q}}
\DeclareMathOperator{\rank}{rk}
\newcommand{\Code}{\mathcal{C}}
\newcommand{\code}{\mathcal{C}}
\renewcommand{\H}{\ve{H}}
\renewcommand{\c}{\ve{c}}
\newcommand{\G}{\ve{G}}
\newcommand{\E}{\ve{E}}
\newcommand{\Eset}{\mathcal{E}}
\newcommand{\C}{\ve{C}}
\newcommand{\R}{\ve{R}}
\newcommand{\Sset}{\mathbb{S}}
\newcommand{\Scal}{\mathcal{S}}
\newcommand{\Rset}{\mathcal{R}}
\newcommand{\Iset}{\mathcal{I}}
\newcommand{\numgroups}{\mu}
\DeclareMathOperator{\supp}{supp}
\newcommand{\Pfailure}{P_\mathrm{failure}}
\newcommand{\Pmiscorrection}{P_\mathrm{miscor.}}
\newcommand{\NN}{\mathbb{N}}
\newcommand{\IC}{\mathcal{IC}}
\newcommand{\removelatexerror}{\let\@latex@error\@gobble}
\begin{document}

\title{On Error Decoding of Locally Repairable and Partial MDS Codes}
\author{\IEEEauthorblockN{Lukas Holzbaur\thanks{The work of L. Holzbaur was supported by the Technical University of Munich -- Institute for Advanced Study, funded by the German Excellence Initiative and European Union 7th Framework Programme under Grant Agreement No. 291763 and the German Research Foundation (Deutsche Forschungsgemeinschaft, DFG) under Grant No. WA3907/1-1. \newline The work of Sven Puchinger has received funding from the German Israeli Project Cooperation (DIP) grant no.~KR3517/9-1.}, Sven Puchinger, Antonia Wachter-Zeh}
\IEEEauthorblockA{
  Institute for Communications Engineering, Technical University of Munich (TUM), Germany\\ Email: \{lukas.holzbaur, sven.puchinger, antonia.wachter-zeh\}@tum.de}}

\maketitle

\begin{abstract}
  We consider error decoding of locally repairable codes (LRC) and partial MDS (PMDS) codes through interleaved decoding. For a specific class of LRCs we investigate the success probability of interleaved decoding. For PMDS codes we show that there is a wide range of parameters for which interleaved decoding can increase their decoding radius beyond the minimum distance with the probability of successful decoding approaching $1$, when the code length goes to infinity.
\end{abstract}

\begin{IEEEkeywords}
  Locally Repairable Codes, Partial MDS codes, Interleaved Decoding, Metzner-Kapturowski
\end{IEEEkeywords}

\section{Introduction}

Vast growth in the popularity of cloud storage and other cloud services in recent years has led to an increased interest in coding solutions for distributed data storage. Different approaches have been considered to reduce the complexity of repairing such systems in the case of node failures, with special attention to the more likely event of a single or very few node failures, in which case an efficient repair is especially important. The most prominent approaches addressing this issue are \emph{locally repairable codes} (LRC) \cite{Gopalan2012,Kamath2014,Tamo2014,Silberstein2015,Holzbaur2018}, which limit the number of nodes involved in the repair, and \emph{regenerating codes} \cite{Rashmi2012,Tamo2017}, which aim to decrease the network traffic required for repair.
The main motivation of storage codes such as LRCs and regenerating codes is erasure correction, as these occur naturally in distributed storage systems whenever nodes fail, e.g., due to hardware failures, power outages or maintenance. As discussed, e.g., in \cite{Blaum2013}, it is often assumed that errors are detected, e.g., by a cyclic redundancy check (CRC), thereby turning into erasures. While this is likely for some causes of errors, such as, e.g., faulty sectors on a hard-drive or SSD, errors caused by faulty synchronization or bad links between the nodes cannot be detected on these lower levels. These events result in \emph{errors}, i.e., the position of occurrence is unknown, which is what we consider in this work. In particular, we show that interleaved codes, i.e., the direct sum of codes with errors occurring in the same positions, can increase the tolerance against errors. As such events are less likely, they are usually not the primary design goal for such systems. Therefore, we stress that the following results show that interleaved decoding can improve the resistance against such error events \emph{without} requiring changing the storage code, but instead can be applied on the same infrastructure. In fact, in distributed data storage the assumption of burst errors, i.e., errors that corrupt the same positions in many codewords, which is required for a possible increase of the decoding radius through interleaved decoding, is very natural (cf.~Figure~\ref{fig:illustration}). Typically, a distributed storage system stores many codewords of the storage code, where each node stores one symbol of each codeword. Hence, if, e.g., one of the servers is not synchronized correctly, all codewords will be corrupted in the same position and in this case interleaved decoding can correct more errors compared to bounded minimum distance decoding.

There have been several works that consider error correction from storage codes such as LRCs and regenerating codes. In \cite{Silberstein2015} the authors consider a concatenated structure with LRCs or regenerating codes as inner codes and rank metric codes as outer codes to protect against adversaries of different types. Regenerating codes with an error tolerance in the repair process were considered in \cite{Rashmi2012}.
In \cite{Dikaliotis2010} a hashing scheme is proposed to detect errors and protect against adversarial nodes. Efficient repair of nodes by error correction from parts of the received word was considered in \cite{Tamo2017}.
In \cite{Holzbaur2018} it was shown that the error correction radius of LRCs can be increased beyond the Johnson radius.

In this work, we first investigate interleaved decoding of interleaved LRCs that are subcodes of efficiently decodable algebraic codes and derive a bound on the success probability based on existing results for the supercodes, cf.~Section~\ref{sec:ILRC}.  In the second part of the paper, Section~\ref{sec:PMDS}, we show that we can decode some interleaved partial maximum distance separable (PMDS) \cite{Blaum2013} codes (also referred to as maximally recoverable codes \cite{Gopalan2013}), beyond their minimum distance with high probability by the decoding algorithm for high-order interleaved codes by Metzner and Kapturowski \cite{metzner1990general}.

\begin{figure}
  \begin{center}
    
    \resizebox{\linewidth}{!}{\def\x{0.5}

\begin{tikzpicture}

\node (S1) at (0,0) [draw,thick,minimum width=\x*1cm,minimum height=\x*5cm,] {};
\node (S2)  [right=\x*0.3cm of S1, draw,thick,minimum width=\x*1cm,minimum height=\x*5cm] {};
\node (S3)  [right=\x*0.3cm of S2, draw,thick,minimum width=\x*1cm,minimum height=\x*5cm] {};
\node (S4)  [right=\x*0.3cm of S3, draw,thick,minimum width=\x*1cm,minimum height=\x*5cm] {};


\node (S5)  [right=\x*0.7cm of S4, draw,thick,minimum width=\x*1cm,minimum height=\x*5cm] {};
\node (S6)  [right=\x*0.3cm of S5, draw,thick,minimum width=\x*1cm,minimum height=\x*5cm] {};
\node (S7)  [right=\x*0.3cm of S6, draw,thick,minimum width=\x*1cm,minimum height=\x*5cm] {};
\node (S8)  [right=\x*0.3cm of S7, draw,thick,minimum width=\x*1cm,minimum height=\x*5cm] {};


\node (S9)  [right=\x*0.7cm of S8, draw,thick,minimum width=\x*1cm,minimum height=\x*5cm] {};
\node (S10)  [right=\x*0.3cm of S9, draw,thick,minimum width=\x*1cm,minimum height=\x*5cm] {};
\node (S11)  [right=\x*0.3cm of S10, draw,thick,minimum width=\x*1cm,minimum height=\x*5cm] {};
\node (S12)  [right=\x*0.3cm of S11, draw,thick,minimum width=\x*1cm,minimum height=\x*5cm] {};


\draw[dotted,thick] (\x*4.9, \x*3) -- (\x*4.9,-\x*2.7);
\draw[dotted,thick] (\x*10.8, \x*3) -- (\x*10.8,-\x*2.7);


\foreach \i in {1,...,12}{
  \node (C\i) at ($(S\i)+(0,\x*2)$) [draw,minimum width=\x*0.7cm,minimum height=\x*0.5cm,rounded corners=1pt] {};
  \node (C\i) at ($(S\i)+(0,\x*1.4)$) [draw,minimum width=\x*0.7cm,minimum height=\x*0.5cm,rounded corners=1pt] {};
  \node (C\i) at ($(S\i)+(0,\x*0.8)$) [draw,minimum width=\x*0.7cm,minimum height=\x*0.5cm,rounded corners=1pt] {};
  \node (C\i) at ($(S\i)+(0,\x*0.2)$) [draw,minimum width=\x*0.7cm,minimum height=\x*0.5cm,rounded corners=1pt] {};
  \node (C\i) at ($(S\i)+(0,\x*-0.4)$) [minimum width=\x*0.7cm,minimum height=\x*0.5cm,rounded corners=1pt] {$\vdots$};
  \node (C\i) at ($(S\i)+(0,\x*-1.4)$) [draw,minimum width=\x*0.7cm,minimum height=\x*0.5cm,rounded corners=1pt] {};
  \node (C\i) at ($(S\i)+(0,\x*-2)$) [draw,minimum width=\x*0.7cm,minimum height=\x*0.5cm,rounded corners=1pt] {};

}


\draw[dashed, rounded corners = 1pt, blue, thick] ($(S1)+(-\x*.7,\x*2.4)$) rectangle ($(S12)+(\x*.7,\x*1.6)$);


\node[anchor = south east] (L1) at ($(S12)+(\x*1,\x*4)$) {\footnotesize LRC codeword};
\path (L1) edge[bend left, ->]  ($(S12)+(\x*.7,\x*2.4)$) ;

\draw [decorate,decoration={brace,amplitude=5pt}]  ($(S1)+(-\x*1,-\x*2.4)$) -- ($(S1)+(-\x*1,\x*2.4)$) node [black,midway,xshift=-0.4cm,rotate=90] {\footnotesize $\ell$ independent codewords};


\draw[thick,->,red] ($(S2)+(0,\x*2)$) -- ($(S2)+(-\x*0.3,-\x*0.4)$) -- ($(S2)+ (\x*0.3,\x*0.4)$) -- ($(S2)+(0,-\x*2)$);
\draw[thick,->,red] ($(S4)+(0,\x*2)$) -- ($(S4)+(-\x*0.3,-\x*0.4)$) -- ($(S4)+ (\x*0.3,\x*0.4)$) -- ($(S4)+(0,-\x*2)$);
\draw[thick,->,red] ($(S7)+(0,\x*2)$) -- ($(S7)+(-\x*0.3,-\x*0.4)$) -- ($(S7)+ (\x*0.3,\x*0.4)$) -- ($(S7)+(0,-\x*2)$);
\draw[thick,->,red] ($(S12)+(0,\x*2)$) -- ($(S12)+(-\x*0.3,-\x*0.4)$) -- ($(S12)+ (\x*0.3,\x*0.4)$) -- ($(S12)+(0,-\x*2)$);


\node at ($(S6)+(\x*0.6,\x*4.2)$) {Servers};
\node at ($(S2)+(\x*0.6,\x*3.2)$) {Local group $1$};
\node at ($(S6)+(\x*0.6,\x*3.2)$) {Local group $2$};
\node at ($(S10)+(\x*0.6,\x*3.2)$) {Local group $3$};


\node (El) at ($(S6)+(\x*0.6,-\x*4.2)$) {\color{red} Burst Errors};
\path ($(El.north west)+(\x*0,\x*-0.2)$) edge[bend left=5, ->]  ($(S2)+(\x*0.1,-\x*2.6)$) ;
\path ($(El.north)+ (-\x*0.5,0)$) edge[bend left=5, ->]  ($(S4)+(\x*0.1,-\x*2.6)$) ;
\path ($(El.north) + (\x*0.5,0)$) edge[bend right=5, ->]  ($(S7)+(-\x*0,-\x*2.6)$) ;
\path ($(El.north east)+(\x*0,\x*-0.2)$) edge[bend right=5, ->]  ($(S12)+(-\x*0.1,-\x*2.6)$) ;

\end{tikzpicture}


\end{center}
\caption{Illustration of LRC coded storage system with burst errors}
\label{fig:illustration}
\vspace{-10pt}
\end{figure}
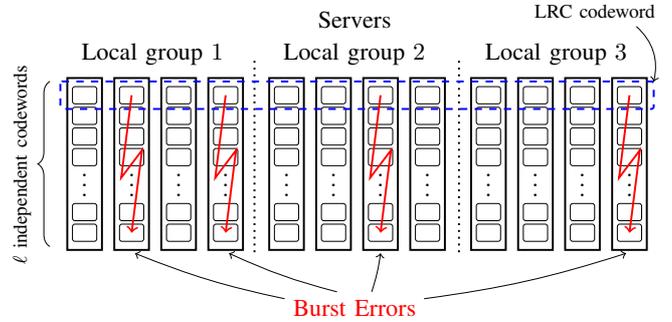

\section{Preliminaries}

\subsection{Locally Repairable Codes}

A code is said to have locality $r$ if every position can be recovered from at most $r$ other codeword positions. If multiple erasures can be tolerated within such a local \emph{repair set}, the code is said to have $(r,\rho)$ locality.

\begin{definition}[$(r,\rho)$-locality]
  A $[n,k]$ code $\code$ has $(r,\rho)$-locality if there exists a partition $\mathcal{A} = \{A_1,A_2,...,A_\mu\}$ into sets of cardinality $|A_j| \leq r+\rho-1$ such that $\forall i \in [n], \ \exists j\in [\mu]$ s.t. $i \in A_j$ and $d\left(\code |_{A_j}\right) \geq \rho$, where $\code |_{A_j}$ denotes the restriction of the code to the positions indexed by $A_j$.
\end{definition}

A Singleton-like upper bound on the achievable distance of an $[n,k,r,\rho]$ LRC was derived in \cite{Gopalan2012} for $\rho=2$ and generalized to $\rho \geq 2$ in \cite{Kamath2014} to
\begin{equation} \label{eq:boundDistanceLRC}
  d \leq n-k+1-\left( \left\lceil \frac{k}{r} \right\rceil -1 \right) (\rho-1) \ .
\end{equation}
In the following we refer to codes achieving this bound with equality as \emph{optimal} LRCs.

Several classes of optimal LRCs are known \cite{tamo2016optimal, Silberstein2013}, including one of particular interest for this work, the so-called Tamo-Barg LRCs \cite{Tamo2014}. The advantage of the latter is, that since they are constructed as subcodes of RS codes, the requirement on the field size is only $q\geq n$ and they can be decoded by any of the well-studied RS decoders. 

\subsection{Partial MDS Codes}

In Section~\ref{sec:PMDS} we consider interleaved decoding of a special class codes with locality, namely partial MDS codes \cite{Blaum2013,gabrys2018constructions,Blaum2016}, also referred to as \emph{maximally recoverable codes} \cite{Huang2007, Chen2007, Gopalan2013}. The distinctive property of these codes is that they guarantee to correct any pattern of erasures that is information theoretically correctable.
\begin{definition}[PMDS codes]\label{def:partialMDS}
   Let $\Code$ be an $[n,k,r,\varrho]$ LRC code, where the local groups are $[r+\varrho-1,r,\varrho]$ MDS codes. We say that the code $\code$ is a \emph{partial MDS code} if for any set $E \subset [n]$, where $E$ is obtained by picking $\varrho-1$ positions from each local group, the distance of the code punctured in these positions is $d(\code|_E) = n-\frac{n(\varrho-1)}{r+\varrho-1}-k+1$.
 \end{definition}
 \begin{remark}
   A more general form of PMDS codes, where the distance, i.e., number of tolerable erasures, can be different in each local group is often considered in literature. For simplicity, we focus on PMDS with the same distance in each local group in this work.
 \end{remark}

\subsection{Interleaved Codes}

Interleaved codes are direct sums of a number of constituent codes.
We will only consider homogeneous interleaved codes over linear codes in this work, for which the constituent codes are all the same and linear.

\begin{definition}[See, e.g., \cite{metzner1990general,krachkovsky1997decoding}]
Let $\Code[n,k,d]$ be a linear code over $\Fq$ and $\ell \in \NN$. The corresponding $\ell$-interleaved code is defined by
\begin{align*}
\IC[\ell; n,k,d] := \left\{ \C = \left[\begin{smallmatrix}
      \c_1 \\ \c_2 \\ \vphantom{\int\limits^x}\smash{\vdots} \\ \c_\ell
\end{smallmatrix}\right] \, : \, \c_i \in \Code \right\}.
\end{align*}
\end{definition}

The assumed error model is as follows.
We want to reconstruct a codeword $\C$ from a received word of the form $\R = \C + \E$, where $\E \in \Fq^{\ell \times n}$ is an error matrix. Let $\Eset$ be the set of indices of non-zero columns of $\E$, then we say that an error matrix is of weight $t$ if $|\Eset| = t$.

For some constituent codes, for instance RS or some AG codes, there are efficient decoders that correct many errors beyond half the minimum distance and even the Johnson radius with high probability.
The first such algorithm was given in \cite{krachkovsky1997decoding} for RS codes and corrects up to $\tfrac{\ell}{\ell+1}(n-k)$ errors.
Since then, many decoders with better complexity and larger decoding radius, as well as some bounds on the probability of decoding failure have been derived.
Due to space restrictions, we cannot list all of the papers. One decoder of special interest for this work was introduced by Metzner and Kapturowski in \cite{metzner1990general} and will be discussed in more detail in Section~\ref{sec:metznerKapturowski}.

\section{Interleaved Locally Repairable Codes}
\label{sec:ILRC}

In data storage, as in data transmission, codes over small fields are generally favorable as they allow for lower complexity decoding of errors or recovery from erasures. The advantage of interleaved codes is that in most cases, i.e., when $<d/2$ errors occur, it is sufficient to consider each codeword separately, thereby keeping the decoding complexity low. Only in a worst case scenario where $\geq d/2$ errors occur, the stored codewords can be viewed as an interleaved code, hence increasing the decoding radius and resolving the errors with high probability. 

\subsection{Decoding in the Interleaved Supercode} \label{subsec:decSupercode}

We first consider the class of Tamo--Barg LRCs \cite{Tamo2014}, which has attracted a lot of attention in recent years. These codes are subcodes of RS codes of the same minimum distance (we refer to the RS code that contains the LRC as its supercode) and can therefore be decoded by any RS decoding algorithm.

Due to the subcode property, we can also directly apply any interleaved RS decoder to decode interleaved Tamo--Barg LRCs.
In the following, we will recall the properties of the decoder in \cite{schmidt2009collaborative}, which is a unique decoder with maximal decoding radius
\begin{align}
  t \leq t_\mathrm{max} :=  \frac{\ell}{\ell+1}(d-1) \ . \label{eq:tmax}
\end{align}
This means that it returns one of the following possible outputs:
\begin{itemize}
\item the transmitted codeword (\emph{success}),
\item another codeword (\emph{miscorrection}/\emph{undetected error event}),
\item or no codeword (\emph{failure}).
\end{itemize}
Up to half the minimum distance, the decoder always succeeds. 
Beyond $d/2$, bounds on the probability of miscorrection and failure, assuming that we choose an error of weight $t$ uniformly at random, are given in \cite{schmidt2009collaborative}.
These probabilities are shown to decrease exponentially in the difference between the maximum decoding radius $t_\mathrm{max}$ and the number of errors $t$, where generally $\Pmiscorrection \ll \Pfailure$. Furthermore, the authors of \cite{schmidt2009collaborative} showed that decoding success, failures, and miscorrections only depend on the error matrix and are independent of the codeword.

Since Tamo--Barg codes are subcodes of RS codes with the same minimum distance, we can decode interleaved Tamo--Barg codes with the decoder in \cite{schmidt2009collaborative}.
If the returned codeword is not a codeword of the interleaved Tamo--Barg code, then we can declare this as a decoding failure.
Since this only happens in case of a miscorrection, the sum of the number of miscorrections and failure events is unchanged. This proves the following corollary.

\begin{corollary}\label{col:TBbound}
  $\ell$-Interleaved Tamo--Barg codes of minimum distance $d$ can correct up to $t_{\mathrm{max}}$ errors and decoding succeeds for a fraction $1-\Pfailure-\Pmiscorrection$, where $\Pfailure$, and $\Pmiscorrection$ are defined as in \cite{schmidt2009collaborative} and $t_{\mathrm{max}}$ is given in (\ref{eq:tmax}).
\end{corollary}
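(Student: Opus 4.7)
The proof is essentially sketched in the paragraph preceding the corollary, and my plan is to make that sketch rigorous. The key structural fact is that a Tamo--Barg code $\Code$ is a subcode of a Reed--Solomon code $\Code_{\mathrm{RS}}$ of the same minimum distance $d$, so $\IC[\ell;n,k,d]$ built from $\Code$ is a subcode of the corresponding $\ell$-interleaved RS code. Hence any received word $\R = \C + \E$ with $\C \in \IC[\ell;n,k,d]$ is also a valid input for the collaborative RS decoder of \cite{schmidt2009collaborative}, and this decoder can be run on $\R$ unmodified.

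Next, I would invoke the three-outcome analysis from \cite{schmidt2009collaborative} for the interleaved RS supercode at error weight $t \leq t_{\max} = \tfrac{\ell}{\ell+1}(d-1)$: the decoder returns (i) the transmitted codeword with probability at least $1 - \Pfailure - \Pmiscorrection$, (ii) another interleaved RS codeword (miscorrection) with probability at most $\Pmiscorrection$, or (iii) nothing (failure) with probability at most $\Pfailure$, where the events depend only on $\E$ and not on $\C$. I would then describe the simple post-processing step: whenever the supercode decoder returns a matrix $\hat{\C}$, check row-by-row whether each $\hat{\c}_i$ lies in $\Code$ (e.g.\ via a Tamo--Barg parity-check matrix); if not, declare a failure for the interleaved Tamo--Barg decoder.

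The crucial observation is that this post-processing never converts a success into a failure, because the transmitted codeword lies in $\IC[\ell;n,k,d] \subseteq$ the interleaved RS code, so if the supercode decoder returns the correct codeword the check passes automatically. Consequently, the post-processing can only reclassify certain miscorrections as failures; the fraction of successful decodings is preserved, yielding the claimed $1 - \Pfailure - \Pmiscorrection$ lower bound. Since $t_{\max}$ is exactly the maximum decoding radius of the supercode decoder, and the subcode decoder inherits this radius, the statement about correcting up to $t_{\max}$ errors follows as well.

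There is no real obstacle: the only subtlety is to note that the randomness model (uniformly chosen error of weight $t$) and the independence of the decoder's behavior from the codeword, both established in \cite{schmidt2009collaborative}, carry over directly because the decoder and the error distribution are unchanged; only the legality check on the output is new, and it is deterministic given $\hat{\C}$.
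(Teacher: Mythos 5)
Your proposal matches the paper's own argument: decode in the interleaved RS supercode with the decoder of \cite{schmidt2009collaborative}, then reject any output not lying in the interleaved Tamo--Barg subcode, observing that this can only turn miscorrections into failures (never successes into failures), so the success fraction $1-\Pfailure-\Pmiscorrection$ and the radius $t_{\mathrm{max}}$ are inherited. The reasoning is correct and essentially identical to the paper's, just spelled out in more detail.
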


It is well known that even for small interleaving orders and short code lengths interleaved decoding is successful with high probability. In a real storage system these values will likely be very high, as the interleaving order is naturally very large. For example, consider a $[n=15,k=8,r=4,\rho=2]$ storage code of distance $d=7$ operating on bytes, i.e., over the field $F_{2^8}$. The unique decoding radius of this code is $\left\lfloor \frac{d-1}{2} \right\rfloor = 3$. Now assume burst errors occurring on hard-drive sectors of typical size $512$ bytes. This results in an interleaving order of $\ell = 512$ and an interleaved decoding radius of $t = 5$. The bound of Corollary~\ref{col:TBbound} gives a success probability $> 1- 10^{-1223}$.

\begin{remark}
 Note that these results only hold for random errors, not adversarial errors, as the probability of successful decoding is related to the rank of the error matrix.
\end{remark}

\section{Decoding of Partial MDS codes beyond their Minimum Distance}
\label{sec:PMDS}
The decoding technique of Section~\ref{subsec:decSupercode} is not specific to LRCs, as the discussed effects on the probability of the decoder outcome always hold when decoding a code by a decoder of its supercode.
In this section we show that if the necessary conditions on the rate are fulfilled, the Metzner-Kapturowski decoding algorithm \cite{metzner1990general} corrects up to $t=n-k-1$ errors in interleaved partial MDS code with probability going to one when the code length goes to infinity. Besides the increase in the decoding radius, another advantage of this decoder is that it is a generic decoder, i.e., it can be applied to any code without requiring any additional structure.

\subsection{A Generalization of Metzner and Kapturowski's Statement} \label{sec:metznerKapturowski}

Metzner and Kapturowski proved in \cite[Theorem~2]{metzner1990general} that a codeword $\C$ of an interleaved code with minimum distance $d$ can be uniquely recovered from a corrupted word $\C+\E$ if
\begin{enumerate}
\item The number of errors is $t := |\Eset| \leq d-2$ and
\item the error matrix $\E$ has full rank $t$ (this implicitly assumes that the interleaving order is high, i.e., $\ell \geq t$),
\end{enumerate}
However, the first condition is very restrictive and not necessary for the decoder to work.
In fact, the proof of \cite[Theorem~2]{metzner1990general} only assumes an implication of the first property: The $t+1$ columns of the parity-check matrix indexed by the error positions $\Eset$ and any other integer in $\{1,\dots,n\} \setminus \Eset$ must be linearly independent.
We will give this property a name in the following definition.

\begin{definition}
Let $\H \in \Fq^{n-k \times n}$ be a parity-check matrix of a linear code $\Code[n,k,d]$.
A set $\Eset \subseteq \{1,\dots,n\}$ with $t = |\Eset|$ is called \emph{$(t+1)$-independent (with respect to $\H$)} if
\begin{align*}
\rank \left(\H_{\Eset \cup \{i\}}\right) = t+1	\quad \forall i \in \{1,\dots,n\} \setminus \Eset,
\end{align*}
where $\H_{\Eset \cup \{i\}}$ is the matrix consisting of the columns of $\H$ indexed by $\Eset \cup \{i\}$.
\end{definition}

Note that for $t\leq d-2$, any set is $t+1$-independent, and for $t\geq n-k$, no set is $t+1$-independent.
Using this definition, we can formally state a generalization of \cite[Theorem~2]{metzner1990general}.

\begin{theorem}\label{thm:MK_generalization}
Let $\C \in \IC$ be a codeword of an interleaved code and let $\H$ be a parity-check matrix of its constituent code $\Code$.
Let $\E$ be an error matrix with $t$ non-zero columns, indexed by $\Eset$.
Then $\C$ can be uniquely recovered from $\C+\E$ by the Metzner--Kapturowski algorithm \cite{metzner1990general} if $\Eset$ is $t+1$-independent w.r.t.\ $\H$ and $\rank\left(\E_\Eset\right)=t$.
\end{theorem}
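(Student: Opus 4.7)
The plan is to walk through the Metzner--Kapturowski algorithm step by step and verify that each step succeeds under the hypotheses $\rank(\E_\Eset)=t$ and $(t+1)$-independence of $\Eset$. First I would compute the syndrome matrix
\begin{equation*}
\S := \H (\C+\E)^\top = \H \E^\top = \H_\Eset \E_\Eset^\top,
\end{equation*}
using that $\H\C^\top = \ve{0}$ row by row. Since $(t+1)$-independence of $\Eset$ in particular implies $t$-independence, the $t$ columns of $\H_\Eset$ are linearly independent, and by assumption $\E_\Eset$ has rank $t$, so the factorization $\S = \H_\Eset \E_\Eset^\top$ yields $\rank(\S) = t$ and $\mathrm{colspan}(\S) = \mathrm{colspan}(\H_\Eset)$.

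The heart of the argument is the identification of the error positions. I would show that
\begin{equation*}
\Eset = \{\, j \in \{1,\dots,n\} \,:\, \H_j \in \mathrm{colspan}(\S)\,\}.
\end{equation*}
The inclusion ``$\subseteq$'' is immediate since each $\H_j$ with $j \in \Eset$ is a column of $\H_\Eset$, whose column span coincides with that of $\S$. For ``$\supseteq$'' I would invoke $(t+1)$-independence: if some $j \notin \Eset$ satisfied $\H_j \in \mathrm{colspan}(\H_\Eset)$, then $\rank(\H_{\Eset \cup \{j\}}) \leq t$, contradicting the definition. Thus $\Eset$ is recovered from $\S$ alone, which is exactly what the MK algorithm does (in practice via row-reducing a stack of $\H$ and $\S$ and reading off which columns depend on which).

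Once $\Eset$ is known, the equation $\S = \H_\Eset \E_\Eset^\top$ is a consistent linear system in the unknown $\E_\Eset^\top \in \Fq^{t \times \ell}$, and because $\H_\Eset$ has full column rank $t$ the solution is unique. Substituting zeros in the remaining columns reconstructs $\E$, and finally $\C = (\C+\E) - \E$. I would close by noting that each step returned the unique correct object, so the codeword is uniquely recovered.

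I expect no serious obstacle: the argument is essentially bookkeeping around the factorization $\S = \H_\Eset \E_\Eset^\top$. The only subtle point is to make explicit where $(t+1)$-independence (as opposed to the stronger assumption $t \leq d-2$ used in \cite{metzner1990general}) is actually invoked, namely solely in the ``$\supseteq$'' direction of the characterization of $\Eset$; everywhere else, plain $t$-independence (i.e., linear independence of the columns of $\H_\Eset$) suffices, and this follows for free from $(t+1)$-independence.
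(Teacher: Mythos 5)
Your proof is correct and follows exactly the route the paper relies on: the paper gives no standalone proof of this theorem, but instead observes that the original proof of \cite[Theorem~2]{metzner1990general} only ever uses the condition $t\le d-2$ through the linear independence of the columns $\H_{\Eset\cup\{i\}}$, which is precisely the $(t+1)$-independence property. Your write-up simply makes that argument explicit --- syndrome factorization $\S=\H_\Eset\E_\Eset^\top$, identification of $\Eset$ as the columns of $\H$ lying in $\mathrm{colspan}(\S)$ (with $(t+1)$-independence used only to exclude columns outside $\Eset$), and unique recovery of $\E_\Eset$ from the full column rank of $\H_\Eset$ --- and correctly isolates where each hypothesis enters.
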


Note that the second condition, $\rank\left(\E_\Eset\right)=t$, is fulfilled for most error matrices with $t$ non-zero columns if the interleaving order $\ell$ is large enough.

In the following subsections, we will see that Theorem~\ref{thm:MK_generalization} is indeed an improvement over \cite[Theorem~2]{metzner1990general} since there are codes with only a few error positions $\Eset$ that are not $t+1$-independent for $t>d-2$.

\subsection{PMDS Codes With Many $(t+1)$-Independent Positions}

A set of erasures $\Eset$ can be corrected if and only if its complement $\bar{\Eset} := \{1,\dots,n\} \setminus \Eset$ contains an \emph{information set}, i.e., indexes $k$ linearly independent columns of the generator matrix. 
The authors of \cite{tamo2016optimal} studied a family of optimal LRCs, which in some parameter range are able to correct $n-k$ erasures with probability approaching $1$ for large code lengths. This follows from showing that the number of information sets relative to the number of all sets with $k$ elements tends to $1$ for $n \to \infty$.

We will use a similar approach in the following to show that the relative number of $(t+1)$-independent positions with $t \leq n-k-1$ tends to $1$ for a family of PMDS codes.

\begin{definition}[\!\!\cite{tamo2016optimal}]
Let $\Rset_1,\dots,\Rset_{n/(r+\varrho-1)}$ be the repair sets of an $[n,k,r,\varrho]$ PMDS code. We define the set
\begin{equation*}
\Sset_{\mu} := \{ \Scal \subseteq \{1,\dots,n\} \, : \, |\Scal|=\mu , \, |\Scal \cap \mathcal{R}_i| \leq r, \, \forall i  \}.
\end{equation*}
\end{definition}

The following was shown in \cite{tamo2016optimal} for a special class of PMDS codes and holds in general for PMDS codes.

\begin{lemma}\label{lem:information_sets_Sk}
Let $\G$ be a generator matrix of an $[n,k,r,\delta]$ PMDS code.
Let $\Scal \subseteq \{1,\dots,n\}$ be of cardinality $k$.
Then, the columns of $\G_\Scal$ (i.e., the columns of $\G$ indexed by $\Scal$) are linearly independent if and only if $\Scal \in \Sset_k$.
\end{lemma}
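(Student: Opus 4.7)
The plan is to prove the two directions of the equivalence separately, exploiting the local MDS structure for the ``only if'' direction and the PMDS puncturing property for the ``if'' direction.

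For the forward direction (linear independence implies $\Scal \in \Sset_k$), I would argue by contrapositive. Suppose $|\Scal \cap \Rset_i| \geq r+1$ for some $i$. Because the local group $\code|_{\Rset_i}$ is an $[r+\varrho-1,r,\varrho]$ MDS code, the linear map $\vec{m} \mapsto \vec{m}\,\G_{\Rset_i}$ from $\Fq^k$ to $\Fq^{r+\varrho-1}$ has image equal to $\code|_{\Rset_i}$, which has dimension exactly $r$. Hence the column space of $\G_{\Rset_i}$ has dimension $r$, so any $r+1$ columns of $\G$ indexed inside $\Rset_i$ are linearly dependent. In particular, the columns of $\G_\Scal$ are linearly dependent, contradicting the hypothesis.

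For the reverse direction, assume $\Scal \in \Sset_k$, so $|\Scal \cap \Rset_i| \leq r$ for every $i$. Then in the complement $\bar{\Scal} := \{1,\dots,n\} \setminus \Scal$ we have
\begin{equation*}
|\bar{\Scal} \cap \Rset_i| \;=\; (r+\varrho-1) - |\Scal \cap \Rset_i| \;\geq\; \varrho - 1
\end{equation*}
for every local group. Thus I can choose a set $E \subseteq \bar{\Scal}$ consisting of exactly $\varrho-1$ positions from each $\Rset_i$. By Definition~\ref{def:partialMDS}, the punctured code $\code|_{\{1,\dots,n\}\setminus E}$ has length $n - \tfrac{n(\varrho-1)}{r+\varrho-1}$ and minimum distance equal to that length minus $k+1$, i.e., it is an MDS code of dimension $k$. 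Its generator matrix is obtained by removing the columns of $\G$ indexed by $E$, so any $k$ of the remaining columns are linearly independent. Since $\Scal \subseteq \{1,\dots,n\} \setminus E$ and $|\Scal| = k$, the columns of $\G_\Scal$ are linearly independent, as desired.

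There is no real obstacle here beyond two small bookkeeping points: first, identifying that the column rank of $\G_{\Rset_i}$ coincides with $\dim \code|_{\Rset_i} = r$ (which is what makes the contrapositive work cleanly), and second, verifying that the complement $\bar{\Scal}$ has enough room in each local group to accommodate a full PMDS erasure pattern of $\varrho-1$ positions per group. Once these are in place, the PMDS property directly delivers an MDS punctured code through which $\Scal$ indexes $k$ independent columns.
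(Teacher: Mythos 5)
Your proof is correct and the ``if'' direction follows essentially the same route as the paper: choose $\varrho-1$ positions per local group outside $\Scal$, puncture to obtain a $k$-dimensional MDS code, and invoke the fact that any $k$ columns of an MDS generator matrix are independent. You additionally spell out the ``only if'' direction via the rank-$r$ column space of $\G_{\Rset_i}$, which the paper leaves implicit (deferring to the cited reference); this is a correct and welcome completion rather than a different approach.
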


\begin{proof}
The statement was proven for the codes in \cite{tamo2016optimal} within the proof of \cite[Lemma~7]{tamo2016optimal}. It holds in general for PMDS codes since any set $\Scal \in \Sset_k$ corresponds to $k$ columns of a generator matrix of the MDS code obtained by puncturing the PMDS code at $\varrho-1$ positions of each local group not in $\Scal$. This puncturing is possible since $\Scal$ intersects with each local group in at most $r$ positions, so there are at least $\varrho-1$ positions left in each group. It is well-known that any $k$ columns of an MDS code's generator matrix are linearly independent.
\end{proof}

The following lemma relates the sets in $\Sset_{\mu}$ to the $(t+1)$-independent property.

\begin{lemma}\label{lem:t+1-independent_S_k+1}
Let $\H \in \Fq^{n-k \times n}$ be a parity-check matrix of an $[n,k,r,\delta]$ partial MDS code.
Then, a set $\Iset$ of cardinality $t := |\Iset|$ is $(t+1)$-independent if and only if it is a subset of the complement of an element $\Scal \in \Sset_{k+1}$, i.e., 
\begin{equation*}
\Iset \subseteq \bar{\Scal} := \{1,\dots,n\} \setminus \Scal. 
\end{equation*}
\end{lemma}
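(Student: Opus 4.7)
The plan is to reduce the linear-algebraic condition of $(t+1)$-independence to a purely combinatorial condition on $\bar{\Iset}$ by combining the standard parity/generator duality with Lemma~\ref{lem:information_sets_Sk}. Specifically, for any $T \subseteq \{1,\dots,n\}$, the columns of $\H$ indexed by $T$ are linearly independent if and only if no nonzero codeword of $\Code$ is supported in $T$, which is equivalent to $\bar{T}$ containing an information set of $\Code$. Applying this with $T = \Iset \cup \{i\}$ and using Lemma~\ref{lem:information_sets_Sk} to identify information sets of a PMDS code with the elements of $\Sset_k$, the claim reduces to the following purely combinatorial equivalence: \emph{for every $i \in \bar{\Iset}$ there exists $\Scal^{(i)} \in \Sset_k$ with $\Scal^{(i)} \subseteq \bar{\Iset}\setminus\{i\}$, if and only if there exists $\Scal \in \Sset_{k+1}$ with $\Scal \subseteq \bar{\Iset}$.}

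The ``if'' direction is direct: from a given $\Scal \in \Sset_{k+1}$ with $\Scal \subseteq \bar{\Iset}$, for each $i \in \bar{\Iset}$ I take $\Scal^{(i)} := \Scal \setminus \{i\}$ when $i \in \Scal$, and an arbitrary $k$-subset of $\Scal$ otherwise. Both choices inherit the defining property $|\cdot \cap \Rset_j| \leq r$ from $\Scal$, so they lie in $\Sset_k$, and by construction they are contained in $\bar{\Iset}\setminus\{i\}$.

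The ``only if'' direction is the main obstacle. My plan is to parametrize both conditions by the per-group counts $n_j := |\bar{\Iset} \cap \Rset_j|$. A greedy argument shows that $\bar{\Iset}$ contains a member of $\Sset_{\mu}$ if and only if $\sum_j \min(r, n_j) \geq \mu$, so the right-hand side translates to $\sum_j \min(r, n_j) \geq k+1$. Since removing $i$ from a group with $n_{j_0} \leq r$ decreases $\min(r, n_{j_0})$ by exactly one (and leaves it unchanged when $n_{j_0} > r$), the per-$i$ hypothesis on the left forces the very same inequality as soon as some $i \in \bar{\Iset}$ lies in an undersaturated group $\Rset_{j_0}$. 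The delicate case is when every non-empty $\Rset_j$ has $n_j > r$, i.e.\ $\bar{\Iset}$ is a disjoint union of complete local groups; this is the technical pinch-point, which I would handle by a separate argument exploiting the PMDS local structure together with the implicit bound $|\Iset| \leq n-k-1$ coming from $(t+1)$-independence.
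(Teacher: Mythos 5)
Your reduction of $(t+1)$-independence to the combinatorial condition (via the duality between information sets of $\Code$ and full-rank column sets of $\H$, plus Lemma~\ref{lem:information_sets_Sk}) is exactly the route the paper takes, and your proof of the ``if'' direction is complete and matches the paper's. The issue is the ``only if'' direction. You have correctly isolated the pinch-point, but the ``separate argument'' you defer to does not exist: in the delicate case the equivalence -- and hence the lemma itself -- is \emph{false}. Concretely, take $r=2$, $\varrho=2$, three local groups of size $r+\varrho-1=3$, so $n=9$, and $k=4=2r$, $t=3\leq n-k-1$. Let $\Iset=\Rset_3$. For every $i\in\bar{\Iset}=\Rset_1\cup\Rset_2$ the set $\bar{\Iset}\setminus\{i\}$ still contains two elements of each group, hence an element of $\Sset_4=\Sset_k$; and indeed $\Iset$ is $4$-independent (a codeword supported on $\Rset_3\cup\{i\}$ must vanish on $\Rset_1\cup\Rset_2$ because the local codes have distance $2$, and a nonzero codeword supported on $\Rset_3$ alone would survive puncturing one coordinate per group with weight $\leq 2$, contradicting the PMDS distance $3$ of the punctured $[6,4]$ MDS code). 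Yet every $\Scal\in\Sset_{k+1}=\Sset_5$ must meet $\Rset_3$, so $\Iset\not\subseteq\bar{\Scal}$ for any such $\Scal$. In your notation this is precisely the case where every nonempty group has $n_j>r$ and $\sum_j\min(r,n_j)=rg=k$; neither the PMDS structure nor the bound $t\leq n-k-1$ rules it out (for $t=n-k-1$ it occurs whenever $k=r$ and $\Iset$ omits $r+1$ positions of a single group).

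You should also know that the paper's own proof silently skips this exact step: the sentence ``Since $i$ is arbitrary, this holds iff $\Iset$ is in the complement of a set $\Scal$ of cardinality $k+1$\dots'' asserts the problematic implication without justification, so your instinct to flag it was right. The damage is limited, however: everything downstream (the lower bound on the fraction of $(t+1)$-independent sets, Lemma~\ref{lem:Partial_MDS_bound_S_k+1}, and Theorem~\ref{thm:PMDS_family_correcting_n-k-1}) only uses the ``if'' direction, which both you and the paper prove correctly; what is lost is the claimed one-to-one correspondence for $t=n-k-1$ and the exactness asserted in Corollary~\ref{col:PsucExact}. A correct write-up should therefore either weaken the lemma to the one implication, or replace $\Sset_{k+1}$ by the set of cardinality-$(k+1)$ sets all of whose $k$-subsets lie in $\Sset_k$.
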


\begin{proof}
First note that $\Scal' \in \Sset_k$ if and only if the complementary columns of $\H$, i.e., $\H_{\bar{\Scal'}}$, have full rank $n-k$. This is due to Lemma~\ref{lem:information_sets_Sk} and the fact that the columns of a generator matrix $\G$ indexed by $\Scal'$ have full rank. Thus, we can find a quasi-systematic parity-check matrix with the identity matrix in the complementary columns (i.e., the complement of an information set of a code is an information set of its dual code), and vice-versa.

The set $\Iset$ is $(t+1)$-independent if and only if the columns $\Iset \cup \{i\}$ of $\H$ are linearly independent for any additional column~$i$.
This again is true iff $\Iset \cup \{i\}$ is contained in an information set of the dual code.
By the above argument, this is equivalent to $\Iset \cup \{i\}$ being contained in the complement of some $\Scal_i' \in \Sset_k$.
Since $i$ is arbitrary, this holds iff $\Iset$ is in the complement of a set $\Scal$ of cardinality $k+1$, that, if any one element is removed, is in $\Sset_k$. It follows that $\Scal$ must be in $\Sset_{k+1}$.
\end{proof}

Due to Lemma~\ref{lem:t+1-independent_S_k+1}, the relative amount of $t+1$-independent positions can be lower-bounded using the set $\Sset_{k+1}$ as follows.

\begin{lemma}
Let $t \leq n-k-1$. Then,
\begin{equation*}
\frac{|\{ \Iset \subseteq \{1,\dots,n\} \, : \, |\Iset| = t, \, \text{$(t+1)$-independent} \}|}{|\{ \Iset \subseteq \{1,\dots,n\} \, : \, |\Iset| = t \}|} \geq \frac{|\Sset_{k+1}|}{\binom{n}{k+1}}.
\end{equation*}
\end{lemma}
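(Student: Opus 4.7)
The plan is to use Lemma~\ref{lem:t+1-independent_S_k+1} to translate the numerator into a counting problem involving $\Sset_{k+1}$, and then estimate the resulting count by double counting. Concretely, let $N_t$ denote the numerator, i.e.\ the number of $(t+1)$-independent subsets of $\{1,\dots,n\}$ of size $t$. By the lemma, $\Iset$ is $(t+1)$-independent iff there exists $\Scal \in \Sset_{k+1}$ with $\Iset \subseteq \bar{\Scal}$, equivalently $\Iset \cap \Scal = \emptyset$.

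I would then count the set of pairs
\begin{equation*}
\mathcal{P} := \{ (\Iset,\Scal) \, : \, |\Iset|=t,\ \Scal \in \Sset_{k+1},\ \Iset \subseteq \bar{\Scal}\}
\end{equation*}
in two ways. Summing over $\Scal$ first, each $\bar{\Scal}$ has cardinality $n-k-1$, so the number of subsets $\Iset$ of size $t$ contained in it is $\binom{n-k-1}{t}$; hence $|\mathcal{P}| = |\Sset_{k+1}| \binom{n-k-1}{t}$. Summing over $\Iset$ first, every $(t+1)$-independent $\Iset$ contributes at most the number of $(k+1)$-subsets of $\bar{\Iset}$, which is $\binom{n-t}{k+1}$ (non-$(t+1)$-independent $\Iset$ contribute zero). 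This yields
\begin{equation*}
|\Sset_{k+1}| \binom{n-k-1}{t} = |\mathcal{P}| \leq N_t \binom{n-t}{k+1}.
\end{equation*}

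To finish, I would use the standard identity $\binom{n}{t}\binom{n-t}{k+1} = \binom{n}{k+1}\binom{n-k-1}{t}$ (both sides count ordered partitions of $\{1,\dots,n\}$ into a $t$-set, a $(k+1)$-set, and the rest), which gives $\binom{n-k-1}{t}/\binom{n-t}{k+1} = \binom{n}{t}/\binom{n}{k+1}$. Substituting into the inequality above and dividing by $\binom{n}{t}$ yields the desired bound. The condition $t \leq n-k-1$ is needed only to ensure that $\binom{n-k-1}{t}$ and $\binom{n-t}{k+1}$ are positive, so that the double-counting argument is nontrivial. There is no real obstacle here; the only subtlety is noticing that the upper bound on each $\Iset$'s contribution is $\binom{n-t}{k+1}$ rather than the exact number of $\Scal \in \Sset_{k+1}$ disjoint from $\Iset$, which is what makes the statement an inequality rather than an equality.
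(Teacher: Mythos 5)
Your proof is correct, and it takes a genuinely different (and in fact tighter) route than the paper. Both arguments rest on the same key translation, Lemma~\ref{lem:t+1-independent_S_k+1}, which identifies the $(t+1)$-independent sets $\Iset$ with the $t$-sets disjoint from some $\Scal \in \Sset_{k+1}$. The paper then argues by monotonicity: it observes that for $t=n-k-1$ complementation gives a bijection with $\Sset_{k+1}$ (so equality holds), and for smaller $t$ it asserts that the fraction of $(n-t)$-sets containing no element of $\Sset_{k+1}$ decreases as $n-t$ grows; this monotonicity claim is stated without justification. Your double count of the pairs $\mathcal{P} = \{(\Iset,\Scal) : |\Iset|=t,\ \Scal\in\Sset_{k+1},\ \Iset\cap\Scal=\emptyset\}$ replaces that assertion with an explicit computation: $|\mathcal{P}| = |\Sset_{k+1}|\binom{n-k-1}{t}$ exactly, $|\mathcal{P}| \le N_t\binom{n-t}{k+1}$ since only $(t+1)$-independent sets contribute and each contributes at most $\binom{n-t}{k+1}$, and the identity $\binom{n}{t}\binom{n-t}{k+1} = \binom{n}{k+1}\binom{n-k-1}{t}$ converts this into the claimed ratio bound. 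All three steps check out, and the role of the hypothesis $t\le n-k-1$ (keeping both binomials nonzero) is correctly identified. What your approach buys is a fully self-contained, one-step verification that also recovers the paper's equality case for $t=n-k-1$ automatically (there each $(t+1)$-independent $\Iset$ is disjoint from exactly one $\Scal$, so the double count is tight); what the paper's phrasing buys is brevity, at the cost of leaving the monotonicity step to the reader.
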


\begin{proof}
If $t=n-k-1$, then there is a one-to-one correspondence between the $(t+1)$-independent sets and the elements in $\Sset_{k+1}$, given by the complement of the respective set. Hence, the bound is fulfilled with equality. For $t<n-k+1$, a set is $t+1$-independent if and only if its complement contains an element of $\Sset_{k+1}$. The number of cardinality-$(n-t)$ sets containing at least one element of $\Sset_{k+1}$ relative to all cardinality-$(n-t)$ sets is greater or equal to $\tfrac{|\Sset_{k+1}|}{\binom{n}{k+1}}$ since the relative number of sets having no element of $\Sset_{k+1}$ as a subset decreases in $n-t$.
\end{proof}

The following theorem is a generalization of \cite[Theorem~3]{tamo2016optimal}, which lower-bounds $\nicefrac{\Sset_{k}}{\binom{n}{k}}$ for the special case $\varrho=2$.

\begin{lemma}\label{lem:Partial_MDS_bound_S_k+1}
Let $\Code$ be an $[n,k,r,\varrho]$ PMDS code.
Then,
\begin{align}
\frac{|\Sset_{k+1}|}{\binom{n}{k+1}} \geq 1-2^{\log_2(n) - (r+1) \log_2\left( \binom{r+\varrho-1}{\xi}^{-\frac{1}{r+1}} \frac{n}{k+1}\right)}, \label{eq:bound_on_S_k+1}
\end{align}
where $\xi := \min\left\{\varrho-2,\left\lfloor\frac{r+\varrho-1}{2}\right\rfloor\right\}$.
\end{lemma}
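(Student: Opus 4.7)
The plan is to upper-bound the number of $(k+1)$-subsets of $\{1,\dots,n\}$ that fail to lie in $\Sset_{k+1}$ and show that this count, divided by $\binom{n}{k+1}$, does not exceed $n\binom{r+\varrho-1}{\xi}\bigl((k+1)/n\bigr)^{r+1}$, which is precisely the exponential term appearing in~(\ref{eq:bound_on_S_k+1}).

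First I would note that, by definition, a subset $\Scal$ of cardinality $k+1$ is \emph{not} in $\Sset_{k+1}$ iff at least one repair set $\Rset_i$ contributes more than $r$, and hence at least $r+1$, positions to $\Scal$. Fixing a group $\Rset_i$ of size $m := r+\varrho-1$ and writing $B_i$ for the collection of $(k+1)$-sets that contain at least $r+1$ elements of $\Rset_i$, I would bound $|B_i|$ by first choosing $r+1$ positions from $\Rset_i$ in $\binom{m}{r+1}$ ways and then completing the subset with $k-r$ arbitrary coordinates from the remaining $n-r-1$ positions, which gives the (overcounted) estimate
\[
|B_i| \;\leq\; \binom{m}{r+1}\binom{n-r-1}{k-r}.
\]
A union bound over the $\mu = n/m$ repair groups then yields $\binom{n}{k+1} - |\Sset_{k+1}| \leq (n/m)\binom{m}{r+1}\binom{n-r-1}{k-r}$.

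To normalize, I would expand the binomial ratio as a telescoping product,
\[
\frac{\binom{n-r-1}{k-r}}{\binom{n}{k+1}} \;=\; \prod_{i=0}^{r}\frac{k+1-i}{n-i} \;\leq\; \Bigl(\tfrac{k+1}{n}\Bigr)^{r+1},
\]
where the inequality follows because each factor $\tfrac{k+1-i}{n-i}$ is nonincreasing in $i$ (as $k+1 \leq n$). Then I would invoke the unimodality of $j \mapsto \binom{m}{j}$ to pass from $\binom{m}{r+1} = \binom{m}{\varrho-2}$ to $\binom{m}{\xi}$: if $\varrho-2 \leq \lfloor m/2 \rfloor$ this is an equality since $\xi = \varrho-2$, while otherwise $\varrho-2$ lies strictly beyond the peak, so $\binom{m}{\varrho-2} \leq \binom{m}{\lfloor m/2 \rfloor} = \binom{m}{\xi}$. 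Combining these estimates with the crude slack $n/m \leq n$ (needed to match the form of the stated bound) produces
\[
\frac{|\Sset_{k+1}|}{\binom{n}{k+1}} \;\geq\; 1 - n\,\binom{m}{\xi}\Bigl(\tfrac{k+1}{n}\Bigr)^{r+1},
\]
which is transformed into the base-$2$ exponential form on the right-hand side of~(\ref{eq:bound_on_S_k+1}) by direct manipulation of logarithms.

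The most delicate step is the case analysis in the unimodality argument, since $\xi$ has two regimes and one must verify that $\binom{m}{r+1}$ is dominated by $\binom{m}{\xi}$ in both. Everything else reduces to a single union bound plus the elementary ratio estimate; no deeper structural property of PMDS codes is used beyond the fact, already embedded in Definition~\ref{def:partialMDS}, that the failure of the $\Sset_{k+1}$ condition is detected purely at the level of the partition into repair sets.
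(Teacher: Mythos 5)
Your proof is correct and follows essentially the same route as the paper: a union bound over the repair groups, the ratio estimate $\binom{n-r-1}{k-r}/\binom{n}{k+1} \leq \left(\tfrac{k+1}{n}\right)^{r+1}$, unimodality of the binomial coefficients to pass to $\binom{r+\varrho-1}{\xi}$, and the relaxation of the number of groups to $n$. The only cosmetic difference is that you count each bad group with a single overcounting term $\binom{r+\varrho-1}{r+1}\binom{n-r-1}{k-r}$, whereas the paper sums over the exact intersection size $j=r+1,\dots,r+\varrho-1$ (picking up a factor $\varrho-1$ that it then also absorbs into $n$); both yield the identical final bound.
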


\begin{proof}
We have
\begin{align*}
&\binom{n}{k+1}-|\Sset_{k+1}| = |\overline{\Sset_{k+1}}| \\
&= \left| \left\{ \Scal \subseteq \{1,\dots,n\} \, : \, |\Scal|=k+1 , \, \exists i \, : \, |\Scal \cap \mathcal{R}_i| > r \right\} \right| \\
&\leq \sum_{i=1}^{\numgroups} \left| \left\{ \Scal \subseteq \{1,\dots,n\} \, : \, |\Scal|=k+1 , \, \, |\Scal \cap \mathcal{R}_i| > r \right\} \right| \\
&\leq \mu \sum_{j=r+1}^{r+\varrho-1} \binom{r+\varrho-1}{j} \underbrace{\binom{n-j}{k+1-j}}_{\leq (\frac{k+1}{n})^{r+1}\binom{n}{k+1} } \\
&\leq \mu \left(\frac{k+1}{n}\right)^{r+1}\binom{n}{k+1} \sum_{j=r+1}^{r+\varrho-1} \underbrace{\binom{r+\varrho-1}{j}}_{\leq \binom{r+\varrho-1}{\xi}} \\
&\leq \mu (\varrho-1) \left(\frac{k+1}{n}\right)^{r+1}\binom{n}{k+1} \binom{r+\varrho-1}{\xi}
\end{align*}
Hence, we have
\begin{align*}
\frac{|\Sset_{k+1}|}{\binom{n}{k+1}} &\geq 1-\mu (\varrho-1) \left(\binom{r+\varrho-1}{\xi}^{\frac{1}{r+1}} \frac{k+1}{n}\right)^{r+1} \\
                                     &\geq 1-n \left(\binom{r+\varrho-1}{\xi}^{\frac{1}{r+1}} \frac{k+1}{n}\right)^{r+1} \\
&= 1-2^{\log_2(n) - (r+1) \log_2\left( \binom{r+\varrho-1}{\xi}^{-\frac{1}{r+1}} \frac{n}{k+1}\right)} \ ,
\end{align*}
which proves the claim.
\end{proof}

Note that for $\varrho \leq r+2$, we always have $\xi = \varrho-2$.

Using the bound in Lemma~\ref{lem:Partial_MDS_bound_S_k+1}, we are able to formulate conditions on the local and global distance of a family of PMDS codes for which the relative size of $\Sset_{k+1}$ compared to all cardinality-$k+1$ subsets of $\{1,\dots,n\}$ approaches $1$ for growing code length.

\begin{lemma}\label{lem:Partial_MDS_bound_S_k+1_asymptotical}
Let $\{\Code_n\}$ be a familiy of $[n,k_n,r_n,\varrho_n]$ PMDS LRC with
\begin{align}
\binom{r_n+\varrho_n-1}{\xi_n}^{-\frac{1}{r_n+1}}  &> C_1 \frac{k_n+1}{n}\label{eq:rate_condition}\\
    r_n+1 &\geq \frac{C_2 \log_2(n)}{\log_2(C_1)} \label{eq:number_of_local_groups_condition}
\end{align}
for some $C_1,C_2>1$, where $\xi_n := \min\left\{\varrho_n-2,\left\lfloor\frac{r_n+\varrho_n-1}{2}\right\rfloor\right\}$.
Then,
\begin{align*}
\frac{|\Sset_{k_n+1}|}{\binom{n}{k_n+1}} \to 1 \quad (n \to \infty).
\end{align*}
\end{lemma}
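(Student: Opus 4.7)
The plan is to simply chain the hypotheses (\ref{eq:rate_condition}) and (\ref{eq:number_of_local_groups_condition}) through the bound established in Lemma~\ref{lem:Partial_MDS_bound_S_k+1}, showing that the exponent of the $2^{(\cdot)}$ correction term tends to $-\infty$, so the correction itself tends to $0$.

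First I would invoke Lemma~\ref{lem:Partial_MDS_bound_S_k+1} for each code $\Code_n$, which gives
\begin{equation*}
\frac{|\Sset_{k_n+1}|}{\binom{n}{k_n+1}} \geq 1-2^{\log_2(n) - (r_n+1) \log_2\left( \binom{r_n+\varrho_n-1}{\xi_n}^{-\frac{1}{r_n+1}} \tfrac{n}{k_n+1}\right)}.
\end{equation*}
Next, I would rewrite condition (\ref{eq:rate_condition}) in the form $\binom{r_n+\varrho_n-1}{\xi_n}^{-\frac{1}{r_n+1}} \tfrac{n}{k_n+1} > C_1$, so that (since $C_1>1$ and $\log_2$ is increasing)
\begin{equation*}
(r_n+1) \log_2\left( \binom{r_n+\varrho_n-1}{\xi_n}^{-\frac{1}{r_n+1}} \tfrac{n}{k_n+1}\right) \geq (r_n+1)\log_2(C_1).
\end{equation*}

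Then I would apply condition (\ref{eq:number_of_local_groups_condition}) to conclude $(r_n+1)\log_2(C_1) \geq C_2 \log_2(n)$. Plugging back into the exponent of the correction term yields
\begin{equation*}
\log_2(n) - (r_n+1) \log_2\!\left(\binom{r_n+\varrho_n-1}{\xi_n}^{-\frac{1}{r_n+1}} \tfrac{n}{k_n+1}\right) \leq (1-C_2)\log_2(n),
\end{equation*}
so the subtracted term is at most $2^{(1-C_2)\log_2(n)} = n^{1-C_2}$. Since $C_2>1$, this goes to $0$ as $n\to\infty$, which gives the claimed limit.

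There is really no substantive obstacle here; the only things to be careful about are (i) checking that the inequality in (\ref{eq:rate_condition}) is oriented correctly so that it produces a \emph{lower} bound on the argument of the outer logarithm (and hence an \emph{upper} bound on the negative exponent), and (ii) ensuring $C_1>1$ so that $\log_2(C_1)$ is positive and condition (\ref{eq:number_of_local_groups_condition}) is meaningful. Both are built into the hypotheses, so the argument reduces to a short chain of monotone substitutions.
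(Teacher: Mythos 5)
Your proof is correct and follows exactly the route the paper intends: the paper's own proof is just the one-line remark that ``the exponent of $2$ in the bound \eqref{eq:bound_on_S_k+1} converges to minus infinity under the given conditions,'' and your chain of substitutions (rewriting \eqref{eq:rate_condition} as a lower bound $C_1$ on the argument of the logarithm, then applying \eqref{eq:number_of_local_groups_condition} to get the exponent bounded by $(1-C_2)\log_2(n)$) is precisely the verification that claim requires.
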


\begin{proof}
It is easy to see that the exponent of $2$ in the bound \eqref{eq:bound_on_S_k+1} converges to minus infinity under the given conditions.
\end{proof}

\begin{remark}
Condition~\eqref{eq:number_of_local_groups_condition} puts a rate constraint on the code. However, if $r_n$ grows faster to infinity than $\varrho_n$, the following argument shows that we can choose arbitrary rates.
We study the asymptotic behavior of $\binom{r_n+\varrho_n-1}{\xi_n}^{-\frac{1}{r_n+1}}$ for $r_n \in \omega(\varrho_n)$ (i.e., $r_n$ grows asymptotically strictly faster than $\varrho_n$):
\begin{align*}
\binom{r_n+\varrho_n-1}{\xi_n}^{-\frac{1}{r_n+1}} &= \frac{1}{\binom{r_n+\varrho_n-1}{\varrho_n-2}^{\frac{1}{r_n+1}}} \\
&\geq \frac{1}{\left(\frac{e(r_n+\varrho_n-1)}{\varrho_n-2}\right)^{\frac{\varrho_n-2}{r_n+1}}} \\
&=\frac{1}{\underbrace{e^{\frac{\varrho_n-2}{r_n+1}}}_{\to \, 1} \cdot \underbrace{\left(1+\frac{r_n+1}{\varrho_n-2}\right)^{\frac{\varrho_n-2}{r+1}}}_{\to \, 1} } \to 1.
\end{align*}
Note that we use that if $r_n$ grows faster than $\varrho_n$, at some point we have $\xi_n = \varrho_n-2$.
\end{remark}

\begin{remark}
By a similar argument as above, Condition~\eqref{eq:rate_condition} in Lemma~\ref{lem:Partial_MDS_bound_S_k+1_asymptotical} can be replaced by
\begin{equation*}
 \left(\frac{R_\mathsf{local}}{e}\right)^{\varrho-2} \approx \left(\frac{r+1}{e(r+\varrho-1)}\right)^{\varrho-2} > C_1 \frac{k+1}{n} \approx C_1 R_\mathsf{global}
\end{equation*}
\end{remark}

Lemma~\ref{lem:Partial_MDS_bound_S_k+1_asymptotical} implies that under the given conditions, asymptotically almost any set of $k+1$ indices is in $\Sset_{k+1}$, and thus, almost any set of $t \leq n-k-1$ error positions is $(t+1)$-independent.

\begin{remark}
Since any cardinality-$k$ subset of $\Sset_{k+1}$ is an information set, Lemma~\ref{lem:Partial_MDS_bound_S_k+1_asymptotical} also implies that the codes satisfying Conditions \eqref{eq:rate_condition} and \eqref{eq:number_of_local_groups_condition}, and can correct almost all $n-k$ erasures asymptotically. This constitutes a generalization of the statement in \cite[Theorem~3]{tamo2016optimal}, which proves the special case $\varrho=2$.
\end{remark}

\subsection{Decoding PMDS Codes Beyond the Minimum Distance}

Using Lemma~\ref{lem:Partial_MDS_bound_S_k+1_asymptotical}, we can state the following explicit class of PMDS codes correcting almost any error up to weight $n-k-1$.

\begin{theorem}\label{thm:PMDS_family_correcting_n-k-1}
Let $\{\Code_n\}$ be a familiy of $[n,k_n,r_n,\varrho_n]$ PMDS LRC over a field $q_n$, where $q_n \to \infty$ for $n \to \infty$ and the parameters $r_n, \varrho_n$ fulfill
conditions \eqref{eq:rate_condition} and \eqref{eq:number_of_local_groups_condition} of Lemma~\ref{lem:Partial_MDS_bound_S_k+1_asymptotical} for fixed constants $C_1,C_2>1$.
Furthermore, let $\ell_n = n-k_n-1$. Then, the family $\{\Code_n'\}$ of $[n,k_n,r_n,\varrho_n]$ codes over the fields of size $q_n^{\ell_n}$ obtained by interpreting the $\ell_n$-interleaved codes of $\Code_n$ as linear codes over the large field $\mathbb{F}_{q_n^{\ell_n}}$, fulfill the following properties:
\begin{itemize}
\item the codes $\Code_n'$ are PMDS,
\item $\Code_n'$ corrects up to $n-k_n-1$ errors with probability approaching $1$ for $n \to \infty$ (assuming uniformly distributed errors of given weight), and
\item the decoding complexity is $O(n^3)$ operations over $\mathbb{F}_{q_n}$.
\end{itemize}
\end{theorem}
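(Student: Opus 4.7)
The plan is to establish the three bulleted claims separately, with the error-correction claim being the main content.

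First, for the PMDS property of $\Code_n'$, I would fix any $\mathbb{F}_{q_n}$-basis of $\mathbb{F}_{q_n^{\ell_n}}$ and observe that the generator matrix $\G \in \mathbb{F}_{q_n}^{k_n \times n}$ of $\Code_n$ serves as a generator matrix of $\Code_n'$ when read over the extension field. Since linear independence of columns is preserved under extension of scalars, every puncturing pattern of the form required by Definition~\ref{def:partialMDS} yields an MDS code for $\Code_n$ iff it does for $\Code_n'$, so the PMDS property transfers automatically.

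Next, for the decoding claim, I would apply the Metzner--Kapturowski algorithm to the interleaved representation of $\Code_n'$ over $\mathbb{F}_{q_n}$ and invoke Theorem~\ref{thm:MK_generalization}. Set $t_n := n - k_n - 1 = \ell_n$. A weight-$t_n$ error in $\mathbb{F}_{q_n^{\ell_n}}^n$, uniform among such errors, corresponds to an error matrix $\E \in \mathbb{F}_{q_n}^{\ell_n \times n}$ whose support $\Eset$ is uniform over $t_n$-subsets of $\{1,\dots,n\}$ and, conditional on $\Eset$, whose submatrix $\E_\Eset$ is uniform over $\ell_n \times t_n$ matrices with non-zero columns over $\mathbb{F}_{q_n}$. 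It will suffice to verify that with probability $1-o(1)$ both (i) $\Eset$ is $(t_n+1)$-independent and (ii) $\rank(\E_\Eset) = t_n$ hold.

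The crux is (i). By Lemma~\ref{lem:t+1-independent_S_k+1}, a size-$t_n$ set $\Iset$ is $(t_n+1)$-independent iff $\Iset \subseteq \bar{\Scal}$ for some $\Scal \in \Sset_{k_n+1}$; because $|\bar{\Scal}| = n - k_n - 1 = t_n = |\Iset|$, this forces $\Iset = \bar{\Scal}$, so complementation is a bijection between size-$t_n$ independent sets and $\Sset_{k_n+1}$. Hence the probability of (i) is \emph{exactly} $|\Sset_{k_n+1}|/\binom{n}{k_n+1}$, which tends to $1$ by Lemma~\ref{lem:Partial_MDS_bound_S_k+1_asymptotical} under the hypothesized growth of $r_n, \varrho_n$. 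For (ii), $\E_\Eset$ is a uniform $t_n \times t_n$ matrix over $\mathbb{F}_{q_n}$ conditioned on non-zero columns, and a standard count of singular matrices gives a failure probability of order $1/q_n$, which vanishes since $q_n \to \infty$; a union bound closes the high-probability argument. The $O(n^3)$ complexity over $\mathbb{F}_{q_n}$ follows directly from the Metzner--Kapturowski algorithm of \cite{metzner1990general}, which is essentially Gaussian elimination on matrices of size $O(n)$ over the small field, so no arithmetic in the extension field is needed. The hard part will be part (i): the argument relies on the bijection that upgrades the generic lower bound on $(t+1)$-independent sets to an exact count at exactly the boundary $t = n - k - 1$, which is precisely the largest $t$ for which Theorem~\ref{thm:MK_generalization} can still apply in a PMDS setting.
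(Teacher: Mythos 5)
Your proposal is correct and follows essentially the same route as the paper: the PMDS property via viewing the interleaved code as a linear code over the extension field, the error-correction claim via Theorem~\ref{thm:MK_generalization} combined with Lemma~\ref{lem:t+1-independent_S_k+1} (including the exact bijection between $(t+1)$-independent sets and $\Sset_{k_n+1}$ at $t=n-k_n-1$), Lemma~\ref{lem:Partial_MDS_bound_S_k+1_asymptotical}, and a union bound with the $O(1/q_n)$ bound on the probability of a rank-deficient error matrix, and the $O(n^3)$ complexity from running Metzner--Kapturowski over the small field. No gaps.
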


\begin{proof}
A (homogeneous) interleaved code is a linear code over the large field of the same parameters as the constituent code. Since puncturing the interleaved code corresponds to puncturing the constituent codes, the definition of PMDS codes directly implies that the interleaved code is also PMDS.

For showing the correction capability, first note that interpreting elements of $\mathbb{F}_{q_n^{\ell_n}}$ as vectors in $\mathbb{F}_{q_n}^{\ell_n}$ gives a bijective mapping between all Hamming errors of weight $t$ in $\mathbb{F}_{q_n^{\ell_n}}^n$ and all burst errors in $\mathbb{F}_{q_n}^{\ell_n \times n}$ of weight $t$.
We use Theorem~\ref{thm:MK_generalization}, Lemma~\ref{lem:Partial_MDS_bound_S_k+1_asymptotical}, and the fact that the fraction of $\ell_n \times t_n$ matrices over the field of size $q_n$ of rank $t_n$ is at least $1-\nicefrac{4}{q_n}$ for $q_n\geq 4$, cf.~\cite[Lemma~3.13]{Overbeck_Diss_InterleveadGab}.
The probability that a random error pattern of weight $t<n-k$ cannot be corrected is therefore
\begin{align*}
&\mathrm{P}(\E \text{ cannot be corrected}) \\
&=\mathrm{P}\left(\Scal \not\subseteq \overline{\supp(\E)} \, \forall \Scal \in \Sset_{k_n+1} \, \lor \,  \rank_{\Fq}(\E)<t_n \right) \\
&= \mathrm{P}\left(\Scal \not\subseteq \overline{\supp(\E)} \, \forall \Scal \in \Sset_{k_n+1}\right) + \underbrace{\mathrm{P} \left(\rank_{\Fq}(\E)<t_n \right)}_{< \, 4/q_n} \\
&< 1-\frac{|\Sset_{k_n+1}|}{\binom{n}{k_n+1}} + \frac{4}{q_n} \to 0 \quad (n \to \infty)
\end{align*}
since $q_n \to \infty$ for $n \to \infty$.

As for the complexity, we apply the Metzner--Kapturowski algorithm on the received matrix in $\mathbb{F}_{q_n}^{\ell_n \times n}$, which runs in complexity $O(n^3)$ over the small field $\mathbb{F}_{q_n}$ since $\ell_n \leq n$.
\end{proof}

\begin{remark}
If the assumption $q_n \to \infty$ for $n \to \infty$ in Theorem~\ref{thm:PMDS_family_correcting_n-k-1} is not fulfilled for a class of codes, this disproves the MDS conjecture. 
\end{remark}

The overall field size $Q_n$ for two families of PMDS codes is given in the following without proof.

\begin{corollary}
Let the family $\{\Code_n\}$ be a subset of the code class in \cite{tamo2016optimal}. Then, the field size is given by
\begin{align*}
\log Q_n &\in O\left( n^{2+\log(\log(n))}\log(n) \right).
\end{align*}
and the overall decoding complexity in bit operations is
\begin{align*}
O^\sim\!\left( n^{4+\log(\log(n))} \right),
\end{align*}
where $O^\sim$ neglects logarithmic terms in $n$.
\end{corollary}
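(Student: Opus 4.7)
The approach combines three ingredients: (i) the explicit field-size scaling of the PMDS construction in \cite{tamo2016optimal} as a function of $(n,r_n,\varrho_n)$, (ii) the interleaving blow-up from Theorem~\ref{thm:PMDS_family_correcting_n-k-1}, which raises the alphabet to $Q_n = q_n^{\ell_n}$, and (iii) the decoding complexity estimate of that same theorem combined with standard fast finite-field arithmetic.

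First I would extract from \cite{tamo2016optimal} the field-size bound for an $[n,k_n,r_n,\varrho_n]$ PMDS code in explicit form, and then substitute the parameter regime forced by conditions \eqref{eq:rate_condition} and \eqref{eq:number_of_local_groups_condition} of Lemma~\ref{lem:Partial_MDS_bound_S_k+1_asymptotical}. Condition \eqref{eq:number_of_local_groups_condition} already forces $r_n \in \Omega(\log n)$, and condition \eqref{eq:rate_condition}, together with the definition $\xi_n = \min\{\varrho_n-2, \lfloor (r_n+\varrho_n-1)/2\rfloor\}$, controls how large $\varrho_n$ may be relative to $r_n$ while keeping $C_1,C_2>1$ fixed. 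The appearance of the $\log\log n$ term in the target bound strongly suggests that the extremal admissible choice gives a $\varrho_n$ growing like a slowly increasing function of $n$ (of order $\log\log n$ or $\log n /\log\log n$), and a careful asymptotic analysis of the binomial term $\binom{r_n+\varrho_n-1}{\xi_n}^{-1/(r_n+1)}$ appearing in \eqref{eq:rate_condition} is what pins down the exponent.

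Once $\log_2 q_n$ is bounded, the remaining steps are essentially bookkeeping. Since $\ell_n = n - k_n - 1 \leq n$, we have $\log_2 Q_n = \ell_n \log_2 q_n \leq n \log_2 q_n$, so the first claim reduces to establishing $\log_2 q_n \in O(n^{1+\log\log n}\log n)$ for the chosen sub-family. For the complexity statement, Theorem~\ref{thm:PMDS_family_correcting_n-k-1} already yields $O(n^3)$ operations over $\mathbb{F}_{q_n}$; each such operation costs $O^\sim(\log q_n) = O^\sim(n^{1+\log\log n})$ bit operations using fast polynomial arithmetic, and multiplying through gives the total $O^\sim(n^{4+\log\log n})$, with the $\log n$ and $\log\log n$ factors absorbed into $O^\sim$.

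The main obstacle is the first step. The field-size formula of \cite{tamo2016optimal} has several parameters coupled in a nontrivial way, and condition \eqref{eq:rate_condition} is itself implicit in $(r_n,\varrho_n,k_n/n)$ through the $\xi_n$ term; matching the constants so that the exponent of $n$ comes out as $2+\log\log n$ (rather than, e.g., $2+\log n$ or $2+o(1)$) requires identifying the tight growth rate of $\varrho_n$ and verifying that the resulting choice is simultaneously consistent with both \eqref{eq:rate_condition} and \eqref{eq:number_of_local_groups_condition}. Once this optimization is carried out, the rest of the corollary follows by direct substitution.
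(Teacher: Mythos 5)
The paper states this corollary explicitly ``without proof,'' so there is no internal argument to compare yours against; I can only judge the proposal on its own terms. Your bookkeeping skeleton is the right one and is internally consistent with the claimed bounds: since $\ell_n = n-k_n-1 \leq n$, the first claim does reduce to $\log_2 q_n \in O\bigl(n^{1+\log\log n}\log n\bigr)$ for the base field of the construction in \cite{tamo2016optimal}, and the complexity claim follows from the $O(n^3)$ operation count of Theorem~\ref{thm:PMDS_family_correcting_n-k-1} (which is carried out over the small field $\mathbb{F}_{q_n}$) multiplied by the $O^\sim(\log q_n)$ bit cost per operation. Those reductions are fine.

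The genuine gap is that the one non-trivial ingredient --- the explicit field-size formula of the PMDS construction in \cite{tamo2016optimal} as a function of $(n,k_n,r_n,\varrho_n)$, specialized to the regime forced by \eqref{eq:rate_condition} and \eqref{eq:number_of_local_groups_condition} --- is never actually produced; you explicitly defer it as ``the main obstacle,'' which means the exponent $2+\log\log n$ is asserted rather than derived. This is precisely where the content of the corollary lives: note that $n^{\log\log n} = (\log n)^{\log n}$, so the extra factor beyond the naive $n^2\log n$ (which is all one gets in the $\varrho=2$ case, where the paper's own example uses $q=(n+1)^{k+1}$, i.e.\ $\log q_n \in O(n\log n)$) must come from the $\varrho_n$-dependence of the field size when $r_n$ and $\varrho_n$ are both of order $\log n$, the scale imposed by \eqref{eq:number_of_local_groups_condition}. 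Until you write down that formula and carry out the substitution, the proposal is a plan rather than a proof, and in particular you cannot rule out that the correct exponent is, say, $2+o(1)$ or $2+\log n$ instead of $2+\log\log n$. A useful sanity check once you have the formula: it must collapse to $O(n^2\log n)$ at $\varrho=2$ to be consistent with the paper's worked example.
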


\begin{corollary}
Let the family $\{\Code_n\}$ be a subset of the code class in \cite{gabrys2018constructions}. Then, the field size is given by
\begin{align*}
\log Q_n &\in O \left( n^3 log(log(n)) \right).
\end{align*}
and the overall decoding complexity is $O^\sim(n^5)$ bit operations.
\end{corollary}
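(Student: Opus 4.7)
The plan is to derive both bounds directly from Theorem~\ref{thm:PMDS_family_correcting_n-k-1}, which already tells us that we work over the interleaved field $\mathbb{F}_{Q_n}$ with $Q_n = q_n^{\ell_n}$ and $\ell_n = n-k_n-1 \leq n$, and that Metzner--Kapturowski runs in $O(n^3)$ operations in the small field $\mathbb{F}_{q_n}$. So the entire corollary reduces to (i) bounding $\log q_n$ for the subfamily selected from \cite{gabrys2018constructions}, (ii) multiplying by $\ell_n$ to bound $\log Q_n$, and (iii) converting small-field operations to bit operations.

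First, I would fix a concrete choice of $r_n$ and $\varrho_n$ that satisfies the hypotheses of Lemma~\ref{lem:Partial_MDS_bound_S_k+1_asymptotical} and is admissible in \cite{gabrys2018constructions}. Condition \eqref{eq:number_of_local_groups_condition} forces $r_n \in \Omega(\log n)$, while \eqref{eq:rate_condition} can be arranged (for instance with $\varrho_n = O(1)$) by the asymptotic argument already given in the remark following Lemma~\ref{lem:Partial_MDS_bound_S_k+1_asymptotical}. With this parameter regime fixed, I would then substitute $(n,r_n,\varrho_n)$ into the field-size expression that \cite{gabrys2018constructions} provides for their PMDS construction, and simplify using $r_n \in \Omega(\log n)$ and $\varrho_n \in O(1)$; the $\log\log n$ factor in the statement should emerge here, coming from a term of the form $\log\bigl(n/r_n\bigr) = \log n - \log\log n + O(1)$ that appears in the exponent of their field-size bound.

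For the field size of the interleaved code, I would simply combine $\log Q_n = \ell_n \log q_n$ with $\ell_n \leq n$ and the bound obtained for $\log q_n$ in the previous step; the stated $O\!\bigl(n^{3}\log\log n\bigr)$ then follows if $\log q_n \in O(n^{2}\log\log n)$, which is exactly what the Gabrys--Yekhanin--Yaakobi construction yields under the parameter regime chosen. For the bit complexity, I would invoke standard fast arithmetic over $\mathbb{F}_{q_n}$: one addition/multiplication costs $\widetilde{O}(\log q_n)$ bit operations, so the Metzner--Kapturowski cost of $O(n^3)$ field operations translates to $\widetilde{O}(n^3 \log q_n) = \widetilde{O}(n^{5})$ bit operations, where $\widetilde{O}$ absorbs the $\log\log n$ as well as the polylogarithmic cost of fast multiplication.

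The main obstacle is bookkeeping step (i): I have to read off the field-size formula of \cite{gabrys2018constructions}, specialize it to parameters satisfying \eqref{eq:rate_condition}--\eqref{eq:number_of_local_groups_condition}, and check that the resulting bound is indeed $\log q_n \in O(n^2 \log\log n)$ rather than something mildly larger. The remaining pieces (multiplying by $\ell_n$, and the standard bit-complexity conversion) are mechanical. Since the previous corollary for the code family of \cite{tamo2016optimal} has been derived in the same way and only the input field-size estimate changes, no new conceptual ingredient is needed, which is why the authors state the result without proof.
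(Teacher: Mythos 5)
The paper states this corollary explicitly ``without proof,'' so there is no authorial derivation to compare against line by line; what can be checked is whether your skeleton is the one the stated bounds force, and it is. The decomposition $\log Q_n = \ell_n \log q_n$ with $\ell_n = n-k_n-1 \leq n$, followed by converting the $O(n^3)$ small-field operations of Metzner--Kapturowski into $\widetilde{O}(n^3\log q_n)$ bit operations, is exactly consistent with both stated corollaries (for the \cite{tamo2016optimal} family the same arithmetic turns $\log q_n \in O(n^{1+\log\log n}\log n)$ into $\log Q_n \in O(n^{2+\log\log n}\log n)$ and $O^\sim(n^{4+\log\log n})$ bit operations), so steps (ii) and (iii) of your plan are sound and evidently what the authors intend.

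The gap is in step (i), which is the only step carrying actual content: you assert that the construction of \cite{gabrys2018constructions} yields $\log q_n \in O(n^2\log\log n)$ in the admissible parameter regime, but you never exhibit the field-size formula or perform the substitution, and your one concrete guess about the mechanism does not work. A term $\log(n/r_n)$ with $r_n = \Theta(\log n)$ equals $\log n - \log\log n + O(1) = \Theta(\log n)$, so it cannot be the source of a multiplicative $\log\log n$ factor; if anything, a $\log\log n$ would have to come from something like $\log r_n$ with $r_n = \Theta(\log n)$ (which condition \eqref{eq:number_of_local_groups_condition} does force), multiplied against a factor of order $n^2$ from the construction's exponent. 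Until you actually write down the field size of the cited construction, specialize it to parameters satisfying \eqref{eq:rate_condition}--\eqref{eq:number_of_local_groups_condition}, and confirm the $O(n^2\log\log n)$ estimate for $\log q_n$, the corollary is not proved --- the framework merely reduces it to that unverified estimate. (A minor point: the reference is to Gabrys, Yaakobi, Blaum et al.; Yekhanin is not an author of that construction.)
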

For the special case of $\varrho = 2$ the probability of successful decoding can be stated exactly.
\begin{corollary} \label{col:PsucExact}
  The probability of successfully decoding $t$ errors in an $[n,k,r,\varrho=2]$ PMDS code is given by
  \begin{equation*}
    P_{\mathsf{suc}} =  P\{\rank (E) = t\} -\frac{|\Sset_{k+1}|}{\binom{n}{k+1}} \ ,
  \end{equation*}
  where, as shown in \cite[Proof of Theorem~3]{tamo2016optimal},
  \begin{equation*}
    |\Sset_{k+1}| = \sum_{j=1}^{\left\lfloor \frac{k+1}{r+1} \right\rfloor} (-1)^{j-1}\binom{n/(r+1)}{j} \binom{n-j(r+1)}{k+1-j(r+1)}
  \end{equation*}
  and the fraction of full rank matrices $E \in \Fq^{\ell \times t}$ \cite{migler2004} is  
  \begin{equation*}
    P\{\rank(E) = t\} = q^{-t \ell} \prod_{j=0}^{t-1} (q^\ell-q^j) \ .
  \end{equation*}

\end{corollary}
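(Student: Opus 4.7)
My plan is to identify the success event with the two conditions in Theorem~\ref{thm:MK_generalization} and compute both factors exactly, exploiting the $\varrho=2$ structure. First I would specialise Lemma~\ref{lem:t+1-independent_S_k+1} to $t=n-k-1$: the complement map $\Iset \mapsto \bar{\Iset}$ is then a genuine bijection between $(t+1)$-independent subsets of size $t$ and members of $\Sset_{k+1}$, so the fraction of size-$t$ supports that are $(t+1)$-independent equals $|\Sset_{k+1}|/\binom{n}{k+1}$ exactly, with no asymptotic slack. The explicit formula for $|\Sset_{k+1}|$ then follows from inclusion--exclusion on the $n/(r+1)$ local groups: a size-$(k+1)$ set violates $|\Scal \cap \Rset_i| \leq r$ iff some group is entirely contained in $\Scal$, and once $j$ groups are fixed there are $\binom{n-j(r+1)}{k+1-j(r+1)}$ ways to extend. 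This alternating sum is already recorded in the proof of \cite[Theorem~3]{tamo2016optimal}, which I would cite rather than re-derive.

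Next I would pin down the natural weight-$t$ error model -- support uniform over the size-$t$ subsets of $\{1,\dots,n\}$ and the $t$ nonzero columns drawn i.i.d.\ uniformly from $\Fq^\ell \setminus \{\0\}$. Under this model, the event that $\supp(\E)$ is $(t+1)$-independent and the event $\rank(\E_\Eset) = t$ are probabilistically independent, because the rank of $\E_\Eset$ depends only on the column values and not on which coordinates carry them. The rank factor is the classical count $P\{\rank(\E)=t\} = q^{-t\ell}\prod_{j=0}^{t-1}(q^\ell-q^j)$ from \cite{migler2004}, reproduced in the statement. Combining these with Theorem~\ref{thm:MK_generalization} then produces the exact success probability in terms of the two factors $P\{\rank(\E)=t\}$ and $|\Sset_{k+1}|/\binom{n}{k+1}$ appearing in the claim.

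The main obstacle is fixing the probabilistic model carefully enough that the two events genuinely decouple; once that is done the corollary is pure substitution of the two cited closed-form expressions. A sanity check I would run on the stated identity is to compare its sign pattern against the independence argument, since the joint probability of two independent events decomposes multiplicatively, so one should verify that the formula in the statement is consistent with (or equivalently, re-express it via) the product $P\{\rank(\E)=t\}\cdot|\Sset_{k+1}|/\binom{n}{k+1}$ obtained from the decoupling step.
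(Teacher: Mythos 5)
The paper prints no proof of this corollary, and your route---Theorem~\ref{thm:MK_generalization} plus the $t=n-k-1$ specialisation of Lemma~\ref{lem:t+1-independent_S_k+1} (where the complement map is a genuine bijection, so the support fraction is exact), the inclusion--exclusion count cited from \cite[Theorem~3]{tamo2016optimal}, and the full-rank count, glued together by independence of support and column values---is precisely the combination the surrounding results are set up for. You are also right to insist on $t=n-k-1$: for smaller $t$ the unnumbered lemma preceding Lemma~\ref{lem:Partial_MDS_bound_S_k+1} gives only an inequality, so no exact statement is available there, a restriction the corollary itself glosses over.

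Two points remain genuinely unresolved, however. First, your independence argument yields the product $P\{\rank(\E)=t\}\cdot|\Sset_{k+1}|/\binom{n}{k+1}$, which is \emph{not} the stated difference; writing $A$ for the full-rank event and $B$ for the $(t+1)$-independence event, the stated formula is the Bonferroni expression $P(A)-P(\bar{B})=1-P(\bar{A})-P(\bar{B})$, which falls short of $P(A)P(B)$ by the cross term $P(\bar{A})P(\bar{B})$. Your closing ``sanity check'' correctly detects this tension, but you defer rather than resolve it: you should conclude that under independence the exact value is the product, and that the printed formula is consistent with the worked example ($125/143$) only because the displayed alternating sum actually counts $|\overline{\Sset_{k+1}}|=\binom{n}{k+1}-|\Sset_{k+1}|$ (the sets containing a full $\Rset_i$, exactly what your inclusion--exclusion describes) and because $P(\bar{A})$ is astronomically small. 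Second, any \emph{exact} success probability needs the converse of Theorem~\ref{thm:MK_generalization}, i.e.\ that the Metzner--Kapturowski decoder fails whenever $\Eset$ is not $(t+1)$-independent or $\rank(\E_\Eset)<t$; the theorem is only a sufficiency statement, so without supplying necessity your argument (and the corollary) proves only a lower bound on $P_{\mathsf{suc}}$. A minor further mismatch: you posit columns uniform on $\Fq^\ell\setminus\{\ve{0}\}$ but then quote $q^{-t\ell}\prod_{j=0}^{t-1}(q^\ell-q^j)$, which is the full-rank fraction over \emph{all} $\ell\times t$ matrices; one of the two must be adjusted.
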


The following example shows that the success probability is reasonably close to $1$ even for small parameters.

\begin{example}
  Consider the PMDS code as defined in \cite{tamo2016optimal} with parameters $[n=15,k=8,r=4,\varrho=2]$ over the field $\Fq$ with $q=16^{k+1} = 2^{36}$. The code is of distance $d=7$, fulfilling the bound (\ref{eq:boundDistanceLRC}) on the distance of an LRC. The unique decoding radius of this code is $t=\left\lfloor \frac{d-1}{2} \right\rfloor = 3$. Given a full rank error matrix, the decoder introduced in \cite{metzner1990general} guarantees decoding of up to $t=d-2=5$ errors. In the case of $t=n-k-1=6$ errors, the error matrix is of full rank with probability $>1-10^{10}$ and Corollary~\ref{col:PsucExact} gives the probability of success as $P_{\mathsf{dec}} \approx \frac{125}{143}$.
\end{example}

\vspace{4pt}
\bibliographystyle{IEEEtran}
\bibliography{main}

\begin{thebibliography}{10}
\providecommand{\url}[1]{#1}
\csname url@samestyle\endcsname
\providecommand{\newblock}{\relax}
\providecommand{\bibinfo}[2]{#2}
\providecommand{\BIBentrySTDinterwordspacing}{\spaceskip=0pt\relax}
\providecommand{\BIBentryALTinterwordstretchfactor}{4}
\providecommand{\BIBentryALTinterwordspacing}{\spaceskip=\fontdimen2\font plus
\BIBentryALTinterwordstretchfactor\fontdimen3\font minus
  \fontdimen4\font\relax}
\providecommand{\BIBforeignlanguage}[2]{{%
\expandafter\ifx\csname l@#1\endcsname\relax
\typeout{** WARNING: IEEEtran.bst: No hyphenation pattern has been}%
\typeout{** loaded for the language `#1'. Using the pattern for}%
\typeout{** the default language instead.}%
\else
\language=\csname l@#1\endcsname
\fi
#2}}
\providecommand{\BIBdecl}{\relax}
\BIBdecl

\bibitem{Gopalan2012}
P.~Gopalan, C.~Huang, H.~Simitci, and S.~Yekhanin, ``{On the Locality of
  Codeword Symbols},'' \emph{IEEE Transactions on Information Theory}, vol.~58,
  no.~11, pp. 6925--6934, Nov. 2012.

\bibitem{Kamath2014}
G.~M. Kamath, N.~Prakash, V.~Lalitha, and P.~V. Kumar, ``{Codes With Local
  Regeneration and Erasure Correction},'' \emph{IEEE Transactions on
  Information Theory}, vol.~60, no.~8, pp. 4637--4660, Aug. 2014.

\bibitem{Tamo2014}
I.~Tamo and A.~Barg, ``{A Family of Optimal Locally Recoverable Codes},''
  \emph{IEEE Trans. Inf. Theory}, vol.~60, no.~8, pp. 4661--4676, 2014.

\bibitem{Silberstein2015}
N.~Silberstein, A.~S. Rawat, and S.~Vishwanath, ``{Error-Correcting
  Regenerating and Locally Repairable Codes via Rank-Metric Codes},''
  \emph{IEEE Transactions on Information Theory}, vol.~61, no.~11, pp.
  5765--5778, Nov. 2015.

\bibitem{Holzbaur2018}
L.~Holzbaur and A.~Wachter-Zeh, ``List decoding of locally repairable codes,''
  in \emph{2018 IEEE International Symposium on Information Theory
  (ISIT)}.\hskip 1em plus 0.5em minus 0.4em\relax IEEE, 2018, pp. 1331--1335.

\bibitem{Rashmi2012}
K.~V. Rashmi, N.~B. Shah, K.~Ramchandran, and P.~V. Kumar, ``{Regenerating
  codes for errors and erasures in distributed storage},'' in \emph{2012 IEEE
  International Symposium on Information Theory}.\hskip 1em plus 0.5em minus
  0.4em\relax IEEE, Jul. 2012, pp. 1202--1206.

\bibitem{Tamo2017}
I.~Tamo, M.~Ye, and A.~Barg, ``{Fractional decoding: Error correction from
  partial information},'' \emph{IEEE International Symposium on Information
  Theory - Proceedings}, no. 1030, pp. 998--1002, 2017.

\bibitem{Blaum2013}
M.~Blaum, J.~L. Hafner, and S.~Hetzler, ``{Partial-MDS Codes and Their
  Application to RAID Type of Architectures},'' \emph{IEEE Transactions on
  Information Theory}, vol.~59, no.~7, pp. 4510--4519, Jul. 2013.

\bibitem{Dikaliotis2010}
T.~K. {Dikaliotis}, A.~G. {Dimakis}, and T.~{Ho}, ``Security in distributed
  storage systems by communicating a logarithmic number of bits,'' in
  \emph{2010 IEEE International Symposium on Information Theory}, June 2010,
  pp. 1948--1952.

\bibitem{Gopalan2013}
P.~{Gopalan}, C.~{Huang}, B.~{Jenkins}, and S.~{Yekhanin}, ``Explicit maximally
  recoverable codes with locality,'' \emph{IEEE Transactions on Information
  Theory}, vol.~60, no.~9, pp. 5245--5256, Sep. 2014.

\bibitem{metzner1990general}
J.~J. Metzner and E.~J. Kapturowski, ``{A General Decoding Technique Applicable
  to Replicated File Disagreement Location and Concatenated Code Decoding},''
  \emph{IEEE Trans. Inform. Theory}, vol.~36, no.~4, pp. 911--917, 1990.

\bibitem{tamo2016optimal}
I.~Tamo, D.~S. Papailiopoulos, and A.~G. Dimakis, ``Optimal locally repairable
  des and connections to matroid theory,'' \emph{IEEE Transactions on
  Information Theory}, vol.~62, no.~12, pp. 6661--6671, 2016.

\bibitem{Silberstein2013}
N.~Silberstein, A.~S. Rawat, O.~O. Koyluoglu, and S.~Vishwanath, ``{Optimal
  locally repairable codes via rank-metric codes},'' \emph{IEEE International
  Symposium on Information Theory - Proceedings}, pp. 1819--1823, 2013.

\bibitem{gabrys2018constructions}
R.~Gabrys, E.~Yaakobi, M.~Blaum, and P.~H. Siegel, ``Constructions of partial
  mds codes over small fields,'' \emph{IEEE Transactions on Information
  Theory}, 2018.

\bibitem{Blaum2016}
M.~{Blaum}, J.~S. {Plank}, M.~{Schwartz}, and E.~{Yaakobi}, ``Construction of
  partial mds and sector-disk codes with two global parity symbols,''
  \emph{IEEE Transactions on Information Theory}, vol.~62, no.~5, pp.
  2673--2681, May 2016.

\bibitem{Huang2007}
C.~{Huang}, M.~{Chen}, and J.~{Li}, ``Pyramid codes: Flexible schemes to trade
  space for access efficiency in reliable data storage systems,'' in
  \emph{Sixth IEEE International Symposium on Network Computing and
  Applications (NCA 2007)}, July 2007, pp. 79--86.

\bibitem{Chen2007}
M.~{Chen}, C.~{Huang}, and J.~{Li}, ``On the maximally recoverable property for
  multi-protection group codes,'' in \emph{2007 IEEE International Symposium on
  Information Theory}, June 2007, pp. 486--490.

\bibitem{krachkovsky1997decoding}
V.~Y. Krachkovsky and Y.~X. Lee, ``{Decoding for Iterative Reed--Solomon Coding
  Schemes},'' \emph{IEEE Transactions on Magnetics}, vol.~33, no.~5, pp.
  2740--2742, 1997.

\bibitem{schmidt2009collaborative}
G.~Schmidt, V.~R. Sidorenko, and M.~Bossert, ``{Collaborative Decoding of
  Interleaved Reed--Solomon Codes and Concatenated Code Designs},'' \emph{IEEE
  Transactions on Information Theory}, vol.~55, no.~7, pp. 2991--3012, 2009.

\bibitem{Overbeck_Diss_InterleveadGab}
R.~Overbeck, ``{Public Key Cryptography based on Coding Theory},'' Ph.D.
  dissertation, TU Darmstadt, Darmstadt, Germany, 2007.

\bibitem{migler2004}
T.~Migler, K.~E. Morrison, and M.~Ogle, ``Weight and rank of matrices over
  finite fields,'' \emph{arXiv preprint math/0403314}, 2004.

\end{thebibliography}

\end{document}
